\definecolor{mathblue}{HTML}{3F3D9A}
\definecolor{mathpurp}{HTML}{9A3D71}
\definecolor{mathsand}{HTML}{9A8C3D}
\definecolor{mathgrn}{HTML}{0F9F4F}
\newtheorem{proposition}{Proposition}
\newtheorem{lemma}{Lemma}
\newtheorem{problem}{Problem}
\def\vzer{{\bs 0}}
\def\bs{\boldsymbol}
\def\bb{\mathbb}
\def\cl{\mathcal}
\def\sf{\mathsf}
\def\ts{\textstyle}
\newcommand{\gone}{\sqrt{\scalebox{.8}{$\frac{\pi}{2}$}}}
\newcommand{\tinv}[1]{{\textstyle\frac{1}{#1}}}
\renewcommand{\leq}{\leqslant}
\renewcommand{\geq}{\geqslant}
\newcommand{\tiid}{\emph{\tiid}\xspace}
\newcommand{\rv}{\emph{r.v.}\xspace}
\newcommand{\rvs}{\emph{r.v.'s}\xspace}
\newcommand{\whp}{\emph{w.h.p.}\xspace}
\DeclareMathOperator{\rad}{rad}
\newcommand{\ie}{\emph{i.e.}, }
\newcommand{\eg}{\emph{e.g.}, }
\newcommand{\vtodo}[1]{\todo[inline]{#1}}
\def\eg{\emph{e.g.},~}
\def\etal{{\emph{et al.}~}}
\def\ie{\emph{i.e.},~}
\def\thm@space@setup{\thm@preskip=1ex
	\thm@postskip=1ex}
\begin{document}
\title{\vspace{-3mm}The Rare Eclipse Problem on Tiles: Quantised Embeddings of Disjoint Convex Sets\vspace{-2mm}}

\IEEEoverridecommandlockouts

\author{\IEEEauthorblockN{Valerio Cambareri, Chunlei Xu, Laurent Jacques}
	\IEEEauthorblockA{ISPGroup, ICTEAM/ELEN, Universit\'e catholique de Louvain, Louvain-la-Neuve, Belgium.
		\thanks{E-mail: {\{valerio.cambareri,~chunlei.xu,~laurent.jacques\}@uclouvain.be}. The authors are partly funded by the Belgian National Fund for Scientific Research (FNRS) under the M.I.S.-FNRS project {\sc AlterSense}. All authors have equally contributed to the realisation of this paper.}
	\vspace{-2mm}}
}

\maketitle
\begin{abstract}
Quantised random embeddings are an efficient dimensionality reduction technique which preserves the distances of low-complexity signals up to some controllable  additive and multiplicative distortions. In this work, we instead focus on verifying when this technique preserves the separability of two disjoint closed  convex sets, \ie in a quantised view of the ``rare eclipse problem'' introduced by Bandeira \etal in 2014. This separability would ensure exact classification of signals in such sets from the signatures output by this non-linear dimensionality reduction. We here present a result relating the embedding's dimension, its quantiser resolution and the sets' separation, as well as some numerically testable conditions to illustrate it. Experimental evidence is then provided in the special case of two $\ell_2$-balls, tracing the phase transition curves that ensure these sets' separability in the embedded domain.
\end{abstract}
\begin{IEEEkeywords}
Random embeddings, dimensionality reduction, quantisation, compressive classification, phase transition.
\end{IEEEkeywords}

	
\section{Introduction}
\label{sec:introduction}
Dimensionality reduction methods are a crucial part of very large-scale machine learning frameworks, as they are in charge of mapping (with negligible losses) the information contained in high-dimensional data to a low-dimensional domain, thus minimising the computational effort of learning tasks. We here focus on a class of {\em non-linear}, {\em non-adaptive} dimensionality reduction methods, \ie {\em quantised random embeddings}, as obtained\footnote{Our notation conventions are reported at the end of this section.} by applying to $\bs x \in \cl K$ (with $\cl K \subset \bb R^n$ any {\em dataset}) 
\begin{equation}
\label{eq:nle}
\bs y = {\sf A} (\bs x) \coloneqq \cl Q_{\delta}(\bs\Phi \bs x + \bs\xi), 
\end{equation}
where $\bs\Phi \in \bb R^{m\times n}$ is a Gaussian random~{\em sensing matrix}, \ie $\bs\Phi \sim \cl N^{m \times n}(0,1)$; $\cl Q_{\delta} (\cdot) \coloneqq \delta \lfloor \frac{\cdot}{\delta} \rfloor$ is a {\em uniform scalar quantiser} of {\em resolution} $\delta>0$ (applied component-wise), yielding a {\em signature} $\bs y \in \delta \bb Z^m$; $\bs\xi  \sim {\cl U}^{m}([0,{\delta}])$ is some {\em dither} drawn uniformly in $[0,\delta]^m$, which is fundamental to stabilise the action of the quantiser~\cite{GrayNeuhoff1998,Jacques2015}. 

The non-linear map described by \eqref{eq:nle} produces \emph{compact}~signatures $\bs y$, either in terms of dimension $m \ll n$, or of bits per entry (controlled by $\delta$) even if $m > n$~\cite{BoufounosJacquesKrahmerEtAl2015}. 
Learning tasks such as classification may then run on $\bs y \in \sf A(\cl K)$ rather than $\bs x \in \cl K$ at reduced storage, transmission, and computational costs, with accuracy depending on $m$, $\delta$. However, contrarily to other non-linear maps  (\eg\cite{RahimiRecht2008,BoufounosRaneMansour2015}), \eqref{eq:nle} retains {\em quasi-isometry} properties~\cite{Jacques2015, JacquesCambareri2016} that grant, under some requirements on $m$ (\ie {\em sample complexity} bounds), the \emph{recovery} of $\bs x$ from $\bs y$ using appropriate algorithms~\cite{MoshtaghpourJacquesCambareriEtAl2016}. 

In this contribution we aim to prove that generic learning tasks can run seamlessly on $\sf A(\cl K)$ by the ability of~\eqref{eq:nle} to preserve the \emph{separability} of different \emph{classes} in $\cl K$. These classes are described by $Q$ disjoint closed convex sets, \ie  $\cl C_i \subset \cl K, {i\in [Q]} $ so that $\forall i,j \in [Q], i\neq j, \cl C_i\cap\cl C_j=\emptyset$. Hence, we inquire whether testing if our data $\bs x \in \cl C_i$ is \emph{equivalent} to doing so given $\bs y$ in \eqref{eq:nle}; for this to hold, it is necessary that the classes' images $\sf A(\cl C_i), {i\in [Q]}$ are still \emph{separable}, \ie $\forall i,j \in [Q], i\neq j, \, \sf A(\cl C_i)\cap \sf A(\cl C_j) = \emptyset$. If this is violated then no learning algorithm can perform exact classification, as the images would ``eclipse'' each other. 
This perspective builds upon that of Bandeira \etal\cite{BandeiraMixonRecht2014}, who introduced this \emph{rare eclipse problem} for linear embeddings, as reviewed in Sec.~\ref{sec:the-rare-eclipse-problem}.  
Focusing on $Q=2$ classes, in Sec.~\ref{sec:the-quantised-eclipse-problem} we define the \emph{quantised eclipse problem} and present our main result, \ie a sample complexity bound which states the conditions on $m$, $\delta$, $\bs\Phi$, and $\cl C_i, i \in [Q]$ under which the images $\sf A(\cl C_i)$ are separable {with high probability} (\whp). In Sec.~\ref{sec:testable-conditions-by-convex-optimisation} this is simplified by lower bounds to the latter probability, which have the advantage of being numerically testable for disjoint convex sets by solving convex optimisation problems. Among such sets, we detail the specific case of two high-dimensional~$\ell_2$-balls in Sec.~\ref{sec:the-case-of-two-euclidean-balls}; this is explored numerically in Sec.~\ref{sec:numerical-experiments} by computing phase transition curves on the above probability bound, indicating a regime with respect to (\emph{w.r.t.})~$m,\delta$ for \eqref{eq:nle} in which the sets' separability is preserved.

{\em Notation:}   
Given a random variable (\rv) $X$ (\eg normal $\cl N(0,1)$ or uniform $\cl U([0,\delta])$ \rvs),
we write $\bs U \sim X^{d_1 \times d_2}$ (\eg $\cl N^{d_1\times d_2}(0,1)$) to denote the $d_1 \times d_2$ matrix (or vector, if $d_2=1$) with {independent and identically distributed} ({\em i.i.d.}) entries $U_{ij} \sim_{\rm i.i.d.} X$. 
Spheres and balls in $\ell_p(\bb R^q)$ are denoted by $\bb S^{q-1}_p$ and $\bb B^q_p$. For a set $\cl C \subset \bb R^n$, its Chebyshev radius is $\rad(\cl C)=\inf\{r>0: \exists \bs c \in \bb R^n,\ \cl C \subset \bs c + r \bb B^n\}$; its image under a map $\sf A$ is $\sf A(\cl C)$;  its projection by a matrix $\bs B$ is $\bs B \cl C$. The cardinality of a set $\cl C$ reads $|\cl C|$, and $[Q] \coloneqq \{1,\ldots,Q\}$. We denote by $C,c,c',c''$ constants whose value can change between
lines. We also write $f\lesssim g$ if $\exists c>0$ such that $f \leq c\,g$, and correspondingly for $f\gtrsim g$. Moreover, $f\simeq g$ means that $f\lesssim g$ and $g\lesssim f$.

{\em Relation to Prior Work:}
Many contributions have discussed linear dimensionality reduction by $\bs y = \bs\Phi \bs x$ with $\bs\Phi\sim X^{m \times n}$ a random matrix having {\em i.i.d.} entries distributed as a sub-Gaussian \rv $X$ (for a survey, see~\cite{Vershynin2012}), \ie \emph{random projections}. Following the work of Johnson and Lindenstrauss~\cite{JohnsonLindenstrauss1984}, such linear embeddings were soon recognised~\cite{DasguptaGupta2003,Achlioptas2003} as distance-preserving, non-adaptive\footnote{Not requiring any potentially large or unavailable training dataset, as opposed to,~\eg principal component analysis} dimensionality reductions for \emph{finite datasets},~\ie with $|\cl K| < \infty$. 
Moreover, several non-linear random embeddings are now available for more general models of $\cl K$~\cite{BoufounosRaneMansour2015,OymakRecht2015,Jacques2015a,Jacques2015,PlanVershynin2014,PlanVershynin2013}; most results on such embeddings rely on preserving distances, rather than the separation between classes within $\cl K$. Regarding this last aspect, Dasgupta~\cite{Dasgupta1999} first analysed the separability of a mixture-of-Gaussians dataset $\cl K$ after random projections. Later, with the rise of {Compressed Sensing} (CS), random projections followed by classification tasks were dubbed \emph{compressive classification}. Davenport \etal\cite{DavenportBoufounosWakinEtAl2010} showed that if $\bs\Phi$ verifies the Restricted Isometry Property (RIP) \emph{w.r.t.} a dataset $\cl K$ (\ie a \emph{stable embedding}) then exact classification can be achieved on $\bs\Phi \cl K$ thanks to distance preservation; $\cl K$ was therein taken as a finite set, or the set of sparse signals. Reboredo \etal\cite{ReboredoRennaCalderbankEtAl2013,ReboredoRennaCalderbankEtAl2016} studied the limits of compressive classification in a Bayesian framework. Finally, Bandeira \etal\cite{BandeiraMixonRecht2014} first explored with the tools of high-dimensional geometry the conditions for the separability of closed convex sets $\cl C_1, \cl C_2 \subset \cl K \coloneqq \bb R^n$ after random projections. We here extend their approach to quantised random embeddings given by \eqref{eq:nle} which, due to their non-linearity, is a non-trivial endeavour that is currently lacking in the  literature.
\section{Quantised Random Embeddings and \\ the Rare Eclipse Problem}
\subsection{The Rare Eclipse Problem}\label{sec:the-rare-eclipse-problem} 
Let us first recall the fundamental question introduced by Bandeira \etal\cite{BandeiraMixonRecht2014} and their main result as follows. 
\begin{problem}[Rare Eclipse Problem (from~\cite{BandeiraMixonRecht2014})]
\label{prob:rep}
Let~${\cl C}_1,{\cl C}_2 \subset {\bb R}^n:$~${\cl C}_1\cap{\cl C}_2=\emptyset$~be closed convex sets,~$\bs\Phi \mathop{\sim} \cl N^{m\times n}(0, 1)$. Given~$\eta \in (0, 1)$, find the smallest $m$ so that 
\begin{equation}
\label{eq:rep}
p_{0} \coloneqq \bb P[\bs\Phi \cl C_1 \cap \bs\Phi \cl C_2 = \emptyset] \geq 1-\eta.
\end{equation}
\end{problem}
\noindent Prob.~\ref{prob:rep} is equivalent to ensuring, for all $\bs x_1 \in \cl C_1$, $\bs x_2 \in \cl C_2$, that $\bs\Phi \bs x_1 \neq \bs\Phi \bs x_2$ with $\bs\Phi \sim \cl N^{m \times n}(0,1)$. Let us define the {\em difference set}~$\cl C^- \coloneqq \cl C_1 - \cl C_2= \{\bs z \coloneqq \bs x_1-\bs x_2 : \bs x_1 \in \cl C_1, \bs x_2 \in \cl C_2\}$. We can then cast \eqref{eq:rep} in terms of the kernel of $\bs\Phi$, \ie
\begin{IEEEeqnarray}{rCl}
p_0 & = &  \ts \bb P[\forall \bs z \in \cl C^-, \bs\Phi \bs z \neq \vzer _m] = 1 -  \bb P[\exists \bs z \in \cl C^-: \bs\Phi \bs z = \vzer_m] \nonumber \\ & \geq  & 1 - \eta, \, \textit{i.e.}, \, \bb P[{\rm Ker}(\bs\Phi) \cap \cl C^- \neq \emptyset] \leq \eta \label{eq:probintersect}.
\end{IEEEeqnarray}
Intuitively, $\eta$ in \eqref{eq:rep} will increase with the ``size'' of $\cl C^-$, as its intersection with ${\rm Ker}(\bs\Phi)$ will be non-empty. This size is here measured by the {\em Gaussian mean width}, \ie for any set $\cl C$,
\[
w(\cl C)\coloneqq\bb E_{\bs g} \sup_{\bs x \in \cl C}|\bs g^\top \bs x|, \, \bs g \sim \cl N^{n}(0, 1).
\]
Bandeira \etal then realised that \eqref{eq:probintersect} is found by Gordon's Escape Theorem \cite{Gordon1988} since, by arbitrarily scaling $\cl C^-$ that amounts to taking the cone $\bb R_+ \cl C^-$, and by its intersection with the sphere $\bb S^{n-1}_2$, we obtain a \emph{mesh} (\ie a closed subset of $\bb S^{n-1}_2$). Let us then define $\cl S \coloneqq (\bb R_+ \cl C^-) \cap \bb S^{n-1}_2$ of width $w_\cap \coloneqq w(\cl S)$, and report their main result (its proof is in \cite{BandeiraMixonRecht2014}).
\begin{proposition}[Corollary 3.1 in \cite{BandeiraMixonRecht2014}]
	\label{prop:mainrep}
	In the setup of Prob.~\ref{prob:rep}, given $\eta \in (0,1)$, if
	$m \gtrsim (w_\cap  + \sqrt{2 \log \tfrac{1}{\eta}})^2 + 1$ 
	then $p_0 \geq 1 - \eta$.
\end{proposition}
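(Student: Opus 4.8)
The plan is to start from the reformulation already obtained in \eqref{eq:probintersect}, namely that $p_0 \ge 1-\eta$ is equivalent to $\bb P[{\rm Ker}(\bs\Phi)\cap\cl C^- \neq \emptyset] \le \eta$, and to reduce this kernel-intersection event to one living on the sphere so that Gordon's Escape Theorem \cite{Gordon1988} applies to the mesh $\cl S=(\bb R_+\cl C^-)\cap\bb S^{n-1}_2$. First I would note that, since $\cl C_1\cap\cl C_2=\emptyset$, the origin is excluded from the difference set, $\vzer\notin\cl C^-$. Because ${\rm Ker}(\bs\Phi)$ is a linear subspace (hence a cone) and $\bs\Phi$ is homogeneous, a nonzero $\bs z\in\cl C^-$ lies in ${\rm Ker}(\bs\Phi)$ iff its normalisation $\bs z/\norm{\bs z}\in\cl S$ does; thus
\[
\{{\rm Ker}(\bs\Phi)\cap\cl C^- \neq \emptyset\} = \{{\rm Ker}(\bs\Phi)\cap\cl S \neq \emptyset\} \subseteq \big\{\ts\min_{\bs x\in\cl S}\norm{\bs\Phi\bs x}=0\big\},
\]
the last event being exactly what Gordon controls. (A minor technical point is that $\cl S$ must be closed for the theorem; when $\cl C_1,\cl C_2$ are unbounded one passes to $\overline{\cl S}$, whose width is unchanged.)

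Next I would invoke Gordon's theorem in its min-singular-value form: for a closed mesh $\cl S\subseteq\bb S^{n-1}_2$ and $\bs\Phi\sim\cl N^{m\times n}(0,1)$,
\[
\ts \bb P\big[\min_{\bs x\in\cl S}\norm{\bs\Phi\bs x}\le a_m - w(\cl S) - t\big] \le e^{-t^2/2}, \quad t\ge 0,
\]
where $a_m\coloneqq\bb E\norm{\bs g}_2=\sqrt2\,\Gamma(\tfrac{m+1}2)/\Gamma(\tfrac m2)$ for $\bs g\sim\cl N^m(0,1)$. Since the one-sided width $\bb E\sup_{\bs x\in\cl S}\bs g^\top\bs x$ appearing in Gordon's bound is dominated by the absolute-value width $w_\cap=w(\cl S)$ used throughout the paper, I may freely replace it by $w_\cap$ at the cost of only loosening the bound. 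Choosing $t=a_m-w_\cap$, which is admissible once $a_m>w_\cap$, makes the threshold $a_m-w_\cap-t$ vanish, yielding
\[
\ts \bb P[{\rm Ker}(\bs\Phi)\cap\cl C^-\neq\emptyset] \le e^{-(a_m-w_\cap)^2/2}.
\]

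It then remains to calibrate $m$. Requiring $e^{-(a_m-w_\cap)^2/2}\le\eta$ is equivalent to $a_m\ge w_\cap+\sqrt{2\log\tfrac1\eta}$. Using the elementary lower bound $a_m^2\ge m^2/(m+1)>m-1$, \ie $a_m>\sqrt{m-1}$, it suffices to ask $m-1\ge(w_\cap+\sqrt{2\log\tfrac1\eta})^2$, that is $m\ge(w_\cap+\sqrt{2\log\tfrac1\eta})^2+1$; this is exactly the claimed sample-complexity condition, with the hidden constant in $\gtrsim$ absorbing the gap between $a_m$ and $\sqrt m$.

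I expect the main obstacle to be not a heavy computation but the careful handling of the geometric reduction and of Gordon's hypotheses: establishing that the scaling to the sphere is lossless (using $\vzer\notin\cl C^-$ together with linearity and homogeneity of $\bs\Phi$), confirming closedness of the mesh in the unbounded case, and matching the absolute-value convention for $w_\cap$ to the one-sided width in the standard statement of Gordon's Escape Theorem. Once these are in place, the probabilistic content is entirely supplied by \cite{Gordon1988} and the rest is the deterministic calibration of $t$ and $m$ above.
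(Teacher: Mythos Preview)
Your proposal is correct and follows exactly the route the paper itself sketches in the paragraph preceding Prop.~\ref{prop:mainrep}: reduce the kernel-intersection event \eqref{eq:probintersect} to the mesh $\cl S=(\bb R_+\cl C^-)\cap\bb S^{n-1}_2$ and apply Gordon's Escape Theorem~\cite{Gordon1988}. The paper does not supply its own proof here---it defers to \cite{BandeiraMixonRecht2014}---so your argument is precisely the intended one, with the calibration $a_m>\sqrt{m-1}$ and the one-sided versus absolute-value width remark being the standard finishing details.
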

\noindent Hence, the sample complexity of Prob.~\ref{prob:rep} is sharply characterised for any difference set whose $w_\cap$ is given or bounded.
%
%
\subsection{The Quantised Eclipse Problem}
\label{sec:the-quantised-eclipse-problem}
Extending Prop.~\ref{prop:mainrep}~to quantised random embeddings as in~\eqref{eq:nle} is not simple. To begin with, any two closed convex sets $\cl C_1, \cl C_2$ would now be mapped into two \emph{countable} sets $\sf A(\cl C_1), \sf A(\cl C_2) \subset \delta \bb Z^m$; verifying when they ``collide'' is our key question below.
\begin{problem}[Quantised Eclipse Problem]
\label{prob:qrep}
Let~${\cl C}_1,{\cl C}_2 \subset {\bb R}^n:$~${\cl C}_1\cap{\cl C}_2=\emptyset$~be closed convex sets, and $\sf A$ defined in \eqref{eq:nle} with $\delta > 0$. Given~$\eta \in (0, 1)$, find the smallest $m$ so that
\begin{equation}
\label{eq:pdelta}
p_\delta \coloneqq \bb P[\sf A(\cl C_1) \cap \sf A(\cl C_2) = \emptyset] \geq 1-\eta.
\end{equation}
\end{problem}
\noindent Note that, since $\sf A$ itself uses $\bs\Phi \mathop{\sim} \cl N^{m\times n}(0, 1)$ before quantisation, $\bs\Phi \cl C_1 \cap \bs\Phi \cl C_2 \neq \emptyset \implies \sf A(\cl C_1) \cap \sf A(\cl C_2) \neq \emptyset$; hence, $p_0 \geq p_\delta$ given the same $\bs\Phi, \cl C_1, \cl C_2$. However, the converse does not hold since $\Phi \cl C_1 \cap \Phi \cl C_2 = \emptyset$ by itself does not suffice to ensure $\sf A(\cl C_1) \cap \sf A(\cl C_2) = \emptyset$ due to,~\eg coarse quantisation with large $\delta$ or some draws of $\bs\xi$ in \eqref{eq:nle}. Then, letting the event $\sf E \coloneqq \{\forall \bs x_1 \in \cl C_1, \bs x_2 \in \cl C_2, \sf A(\bs x_1) \neq \sf A(\bs x_2)\}$, we see \eqref{eq:pdelta} equals
\begin{IEEEeqnarray}{C}
	\label{eq:secondformpdelta} p_\delta = \bb P[\sf E] \geq 1-\eta, \, \textit{i.e.}, \, \bb P[\sf E^{\rm c}]\leq \eta,\\
	\sf E^{\rm c}\coloneqq\{\exists~\bs x_1 \in \cl C_1,\bs x_2 \in \cl C_2 : \sf A(\bs x_1) = \sf A(\bs x_2)\}
\end{IEEEeqnarray}
Hence, $\eta$ bounds the probability that any two $\bs x_1 \in \cl C_1$, $\bs x_2 \in \cl C_2$ are \emph{consistent}. Note that, by consistency \cite{Jacques2015}, $\sf A(\bs x_1) = \sf A(\bs x_2) \implies \|\bs\Phi \bs z \|_\infty < \delta$ with $\bs z = \bs x_1 - \bs x_2 \in \cl C^-$. Thus, introducing the {\em separation} $\sigma \coloneqq \min_{\bs z \in \cl C^-} \|\bs z\|_2$, it is expected that $\eta$ will decay to $0$ as $\sigma$ increases and $\delta$ decreases. 

This is also sustained by the fact that $\sf A$ is known to respect \whp the Quantised Restricted Isometry Property (QRIP)~\cite{JacquesCambareri2016} over some $\cl K \subset \bb R^n$ provided  $\bs \Phi$ satisfies a $(\ell_1,\ell_2)$-form of the RIP (see Lemma~\ref{lemma:rip12}) and $m$ is large before the dimension of $\cl K$. If the QRIP holds, we would then have, for all $\bs u_1,\bs u_2 \in \cl K$, 
$$
\ts |\frac{1}{m}\|\sf A(\bs u_1)-\sf A(\bs u_2)\|_1 - c' \|\bs u_1 - \bs u_2\|_2|\leq \varepsilon \|\bs u_1 - \bs u_2\|_2 + c \delta \varepsilon', 
$$
for some controllable distortions $\varepsilon,\varepsilon' >0$ and constants $c,c'>0$. With $\cl K \coloneqq \cl C_1 \cup \cl C_2$, $\bs x_1 \in \cl C_1$ and $\bs x_2 \in \cl C_2$, this ensures that $\frac{1}{m}\|\sf A(\bs x_1)-\sf A(\bs x_2)\|_1 \geq (c'-\varepsilon) \|\bs z\|_2 - c \delta \varepsilon' \geq (c'-\varepsilon) \sigma - c \delta \varepsilon'$. Thus, $\sf A(\bs x_1) \neq \sf A(\bs x_2)$ simply follows if $\tfrac{\sigma}{\delta} > \tfrac{c \varepsilon'}{c' - \varepsilon}$. 

Before introducing our main result, let us present two lemmata, whose proof is given in the Appendix. 
The first assesses when $\bs\Phi \sim \cl N^{m \times n}(0,1)$ respects a $(\ell_1, \ell_2)$-form of the RIP for a {mesh} (see,~\eg\cite[Cor. 2.3]{PlanVershynin2014},\cite{Schechtman2006}).
\begin{lemma}
	\label{lemma:rip12}
	Let $\epsilon_0 >0$ and $\cl S \subset \bb S_2^{n-1}$. If $m \gtrsim \epsilon_0^{-2} w^2(\cl S)$ and $\bs\Phi \sim \cl N^{m\times n}(0,1)$, then there exist some $C,c>0$ such that, with probability exceeding $1-C\exp(-c \epsilon_0^2 m)$ and $\kappa_0 = \gone$, 
	\begin{equation}
	\label{eq:rip12}
	\ts  (1-\epsilon_0) \leq \frac{\kappa_0}{m} \|\bs\Phi \bs u\|_1 \leq (1+\epsilon_0),\quad\forall \bs u \in \cl S.
	\end{equation}
\end{lemma}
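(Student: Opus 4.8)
The plan is to read \eqref{eq:rip12} as a \emph{uniform} concentration (empirical process) statement and to control it by combining Gaussian concentration of measure with a Gaussian-width estimate of the expected deviation. First I would settle the pointwise picture: for any fixed $\bs u \in \bb S^{n-1}_2$ and a single Gaussian row $\bs\phi_i \sim \cl N^n(0,1)$, the scalar $\bs\phi_i^\top \bs u$ is standard normal, so $|\bs\phi_i^\top \bs u|$ is half-normal with $\Ex |\bs\phi_i^\top \bs u| = \sqrt{2/\pi} = 1/\kappa_0$. Hence $\kappa_0\,\Ex\, \tfrac1m \|\bs\Phi \bs u\|_1 = 1$ for every $\bs u \in \cl S$, and the lemma reduces to showing that $\tfrac{\kappa_0}{m}\|\bs\Phi \bs u\|_1$ stays within $\epsilon_0$ of this constant mean, uniformly over $\cl S$.

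I would then set $F(\bs\Phi) \coloneqq \sup_{\bs u \in \cl S}\big|\tfrac{\kappa_0}{m}\|\bs\Phi\bs u\|_1 - 1\big|$ and split into (i) concentration of $F$ about its mean and (ii) a bound on $\Ex F$. For (i), using $\|\cdot\|_1 \leq \sqrt{m}\|\cdot\|_2$ together with $\|(\bs\Phi-\bs\Phi')\bs u\|_2 \leq \|\bs\Phi-\bs\Phi'\|_{2\to2}\leq \|\bs\Phi-\bs\Phi'\|_F$ for $\bs u \in \bb S^{n-1}_2$, one checks that $\bs\Phi \mapsto F(\bs\Phi)$ is $\kappa_0/\sqrt{m}$-Lipschitz in the Frobenius norm; since $\bs\Phi$ has i.i.d.\ standard normal entries, Gaussian concentration gives $\Prob[F \geq \Ex F + t] \leq \exp(-m t^2 / (2\kappa_0^2))$. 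Choosing $t = \epsilon_0/2$ already produces the advertised tail $C\exp(-c\epsilon_0^2 m)$, provided I can show $\Ex F \leq \epsilon_0/2$.

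The crux is step (ii): bounding $\Ex F$ by the Gaussian mean width. I would symmetrise the empirical process with i.i.d.\ Rademacher signs $\varepsilon_i$ to obtain $\Ex F \leq \tfrac{2\kappa_0}{m}\, \Ex \sup_{\bs u \in \cl S}\big|\sum_{i} \varepsilon_i |\bs\phi_i^\top \bs u|\big|$, and then invoke the contraction principle for the $1$-Lipschitz map $t\mapsto |t|$ (which vanishes at $0$) to strip the absolute values at the cost of a constant factor. What survives is $\Ex \sup_{\bs u}|(\sum_i \varepsilon_i \bs\phi_i)^\top \bs u|$; conditioning on the signs, $\sum_i \varepsilon_i \bs\phi_i \sim \cl N^n(0,m)$, i.e.\ it equals $\sqrt{m}\,\bs g$ with $\bs g\sim\cl N^n(0,1)$, so this expectation is exactly $\sqrt{m}\, w(\cl S)$. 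Collecting constants yields $\Ex F \lesssim \kappa_0\, w(\cl S)/\sqrt{m}$, so the hypothesis $m \gtrsim \epsilon_0^{-2} w^2(\cl S)$ (with a large enough absolute constant) forces $\Ex F \leq \epsilon_0/2$, closing the argument.

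I expect the main obstacle to be precisely the passage from the \emph{empirical} (Rademacher) width to the \emph{geometric} Gaussian width $w(\cl S)$: the symmetrisation and, above all, the use of the contraction principle to remove the non-smooth $|\cdot|$ while preserving the identification of $\sum_i \varepsilon_i \bs\phi_i$ as a rescaled Gaussian vector. The mean computation and the Lipschitz/Gaussian-concentration step are routine. This is exactly the route taken in \cite[Cor.~2.3]{PlanVershynin2014} and \cite{Schechtman2006}.
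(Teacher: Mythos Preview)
Your argument is correct and is precisely the route the paper points to. Note that, despite the sentence announcing that both lemmata are proved in the Appendix, the paper in fact does \emph{not} supply its own proof of Lemma~\ref{lemma:rip12}: the Appendix contains only the proof of Lemma~\ref{lemma:tfd-dithered-embed}, and for Lemma~\ref{lemma:rip12} the paper simply defers to \cite[Cor.~2.3]{PlanVershynin2014} and \cite{Schechtman2006}. Your proposal reproduces exactly that argument---mean identification via the half-normal, Gaussian concentration through the $\kappa_0/\sqrt{m}$ Lipschitz bound on $\bs\Phi\mapsto F(\bs\Phi)$, and the symmetrisation/contraction step reducing the expected supremum to $\sqrt{m}\,w(\cl S)$---so there is nothing to compare: your write-up \emph{is} the omitted proof.
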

\noindent Thus, provided $m \gtrsim \epsilon_0^{-2} w^2_\cap$ and defining $\cl D_{\ell_1}(\bs a,\bs b) \coloneqq \tinv{m} \|\bs a -\bs b\|_1$, applying Lemma~\ref{lemma:rip12} to $\cl S \coloneqq (\bb R_+ \cl C^-) \cap \bb S^{n-1}_2$ yields
\begin{equation}
\label{eq:rip12-Cm}
\ts \big|\kappa_0 \cl D_{\ell_1}(\bs\Phi \bs x_1\!, \bs\Phi \bs x_2)  - \|\bs x_1 - \bs x_2\|_2 \big|\ \leq\ \epsilon_0 \|\bs x_1 - \bs x_2\|_2,
\end{equation}
with the same probability and for all $\bs x_1 \in \cl C_1, \bs x_2 \in \cl C_2$, since $\tfrac{\bs x_1 - \bs x_2}{\|\bs x_1 - \bs x_2\|_2} \in \cl S$. 
Moreover, since $w^2(\bb S_2^{n-1}) \lesssim n$, provided $m
\gtrsim \epsilon^{-2} n$ for some $\epsilon >0$, we also have with probability exceeding $1- C\exp(-c \epsilon^2 m)$,  
\begin{equation}
\label{eq:rip12-Rn}
\ts  (1-\epsilon) \|\bs u\|_2 \leq \frac{\kappa_0}{{m}} \|\bs\Phi \bs u\|_1 \leq (1+\epsilon) \|\bs u\|_2,\ \forall \bs u \in \bb R^n.
\end{equation}

The second lemma proves that the mapping ${\sf A}'(\cdot) \coloneqq \cl Q(\cdot
+ \bs \xi)$, with $\bs \xi \sim \cl U^m([0,\delta])$, embeds\footnote{That is, in the Gromov-Hausdorff sense~\cite{PlanVershynin2014}.} \whp $\bb R^m$ in $\delta \bb Z^m$
in the metric $\cl D_{\ell_1}$ and up to some controlled distortions. This lemma uses the Kolmogorov
entropy $\cl H_q(\cl E, \rho) \coloneqq \log \cl N_q(\cl E, \rho)$ of a bounded subset $\cl E \subset \bb R^m$
in the $\ell_q$-metric ($q\geq 1$) defined for $\rho>0$, with $\cl N_p(\cl E, \rho)$ the cardinality of its smallest $\rho$-covering in the same metric.  
\begin{lemma}
	\label{lemma:tfd-dithered-embed} Let $\cl E \subset \bb R^m$ be a bounded set. Given $\epsilon, \delta > 0$, if
	$$
	\ts m \gtrsim \epsilon^{-2} \cl H_1(\cl E, \tfrac{m\delta\epsilon^2}{1+\epsilon}),
	$$
	then, for $\bs \xi \sim \cl U^m([0,\delta])$ and with probability exceeding $1 - C\exp(-cm\epsilon^2)$ for some $C,c>0$, we have 
	\begin{equation}
	\label{eq:1}
	\ts \big| \cl D_{\ell_1}({\sf A}'(\bs a),{\sf A}'(\bs b)) - \cl D_{\ell_1}(\bs a,\bs b)\big| \lesssim \delta\epsilon,\ \forall \bs a,\bs b \in \cl E.
	\end{equation}
\end{lemma}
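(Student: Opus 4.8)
The plan is to prove the claim first for a single fixed pair $\bs a,\bs b\in\cl E$ and then upgrade to a uniform statement over $\cl E$ by a covering argument. For the pointwise step, fix a coordinate $i$ and set $d_i\coloneqq|a_i-b_i|$. Since $\cl Q_\delta$ is monotone, $D_i\coloneqq|\cl Q_\delta(a_i+\xi_i)-\cl Q_\delta(b_i+\xi_i)|$ equals $\delta$ times the number of grid points of $\delta\bb Z$ falling between $a_i+\xi_i$ and $b_i+\xi_i$. As $\xi_i\sim\cl U([0,\delta])$, the expected number of grid points in an interval of length $d_i$ is exactly $d_i/\delta$, so $\E_{\xi_i}D_i=d_i$; moreover $D_i$ lies within $\delta$ of $d_i$, i.e. $|D_i-d_i|\le\delta$ surely. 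Because the dithers are independent across $i$, the quantity $\cl D_{\ell_1}({\sf A}'(\bs a),{\sf A}'(\bs b))=\tfrac1m\sum_i D_i$ is an average of $m$ independent random variables of mean $\cl D_{\ell_1}(\bs a,\bs b)$ and range $\delta$. Hoeffding's inequality then gives, for the single pair, $\bb P\big[|\cl D_{\ell_1}({\sf A}'(\bs a),{\sf A}'(\bs b))-\cl D_{\ell_1}(\bs a,\bs b)|>\delta\epsilon\big]\le 2\exp(-2m\epsilon^2)$.

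Next I would fix a minimal $\rho$-net $\cl E_\rho\subset\cl E$ in the $\ell_1$ metric with $\rho=\tfrac{m\delta\epsilon^2}{1+\epsilon}$, so that $\log|\cl E_\rho|=\cl H_1(\cl E,\rho)$. Applying the pointwise bound to each of the at most $|\cl E_\rho|^2$ pairs of net points and taking a union bound, the failure probability is at most $2\exp\big(2\cl H_1(\cl E,\rho)-2m\epsilon^2\big)$, which collapses to the desired form $C\exp(-cm\epsilon^2)$ precisely under the hypothesis $m\gtrsim\epsilon^{-2}\cl H_1(\cl E,\rho)$. On this event, $|\cl D_{\ell_1}({\sf A}'(\bs a'),{\sf A}'(\bs b'))-\cl D_{\ell_1}(\bs a',\bs b')|\lesssim\delta\epsilon$ holds simultaneously for all $\bs a',\bs b'\in\cl E_\rho$.

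Finally, to pass from the net to arbitrary $\bs a,\bs b\in\cl E$, I would pick nearest net points $\bs a',\bs b'$ with $\|\bs a-\bs a'\|_1,\|\bs b-\bs b'\|_1\le\rho$ and bound the target distortion by the triangle inequality through the net pair, splitting it into an original-domain residual, the net bound above, and a quantised-domain residual. The original-domain term is harmless: $|\cl D_{\ell_1}(\bs a,\bs b)-\cl D_{\ell_1}(\bs a',\bs b')|\le\tfrac1m(\|\bs a-\bs a'\|_1+\|\bs b-\bs b'\|_1)\le 2\rho/m=\tfrac{2\delta\epsilon^2}{1+\epsilon}\lesssim\delta\epsilon$, and the factor $1/(1+\epsilon)$ in $\rho$ is exactly what balances this residual against the target accuracy. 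I expect the quantised-domain residual $\tfrac1m\big(\|{\sf A}'(\bs a)-{\sf A}'(\bs a')\|_1+\|{\sf A}'(\bs b)-{\sf A}'(\bs b')\|_1\big)$ to be the crux: the \emph{discontinuity} of $\cl Q_\delta$ is the main obstacle, since a crude coordinatewise Lipschitz estimate only yields $\|{\sf A}'(\bs a)-{\sf A}'(\bs a')\|_1\le\|\bs a-\bs a'\|_1+m\delta$, whose per-coordinate slack $\delta$ does not vanish with $\rho$. The route I would take to resolve it is to avoid this deterministic bound and instead exploit that the per-coordinate jump $|\cl Q_\delta(a_i+\xi_i)-\cl Q_\delta(a'_i+\xi_i)|$ is at most $\delta$ and is nonzero with $\xi_i$-probability only $|a_i-a'_i|/\delta$, so that over the dither its mean is $\|\bs a-\bs a'\|_1\le\rho$ with small fluctuations; folding this control into the same high-probability event (via the bounded quantisation error $\cl Q_\delta(x)-x\in(-\delta,0]$ and the fine radius $\rho\simeq m\delta\epsilon^2$) keeps the residual at order $\rho/m\simeq\delta\epsilon^2\lesssim\delta\epsilon$. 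Collecting the three contributions then yields the claimed $\lesssim\delta\epsilon$ bound on the event of the previous paragraph, with the stated probability.
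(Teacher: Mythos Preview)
Your pointwise concentration and the union bound over net pairs are correct, and you have correctly identified the discontinuity of $\cl Q_\delta$ as the obstacle in extending from the net to all of $\cl E$. However, the resolution you sketch has a genuine gap. You propose to control $\tfrac1m\|{\sf A}'(\bs a)-{\sf A}'(\bs a')\|_1$ by observing that its mean over the dither is $\tfrac1m\|\bs a-\bs a'\|_1\le\rho/m$ ``with small fluctuations''. But those fluctuations are small only for a \emph{fixed} $\bs a$; to get this uniformly over all $\bs a$ in the $\ell_1$-neighbourhood of $\bs a'$ is precisely the difficulty, and no union bound is available since $\bs a$ ranges over a continuum. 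The ``bounded quantisation error $\cl Q_\delta(x)-x\in(-\delta,0]$'' you invoke only yields the deterministic estimate $\|{\sf A}'(\bs a)-{\sf A}'(\bs a')\|_1\le\|\bs a-\bs a'\|_1+m\delta$, which, as you note yourself, is too loose by a factor $1/\epsilon$. So as written, the quantised-domain residual is not controlled on the stated high-probability event.

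The paper's proof circumvents this obstacle differently: rather than bounding $\|{\sf A}'(\bs a)-{\sf A}'(\bs a')\|_1$, it introduces a \emph{softened} pseudo-distance $\cl D^t(\bs a,\bs b)=\tfrac1m\sum_i d^t(a_i,b_i)$, parametrised by $t\in\bb R$, where $d^t$ counts thresholds of $\delta\bb Z$ between $a_i$ and $b_i$ but with forbidden (or relaxed) intervals of half-width $|t|$ around each threshold. The point of the softening is a \emph{deterministic} continuity estimate~\cite[Lemma~2]{Jacques2015}: for any $P>0$,
\[
\ts \cl D^t(\bs a,\bs b)\ \le\ \cl D^{t-\rho P/m}(\bs a_0,\bs b_0)+8\big(\tfrac{\delta}{P}+\tfrac{\rho}{m}\big),
\]
and similarly from below. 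Concentration (plus union bound) is then applied only to $\cl D^{t\pm\rho P/m}(\bs a_0+\bs\xi,\bs b_0+\bs\xi)$ for the \emph{finitely many} net pairs, where its mean differs from $\cl D_{\ell_1}(\bs a_0,\bs b_0)$ by $O(|t|+\rho P/m)$. Choosing $P^{-1}=\epsilon$ and $\rho=m\delta\epsilon^2/(1+\epsilon)$ makes all residual terms $\lesssim\delta\epsilon$, and finally taking $t=0$ recovers $\cl D_{\ell_1}({\sf A}'(\bs a),{\sf A}'(\bs b))$. The softening thus trades the discontinuity of $\cl Q_\delta$ for a controllable bias in $t$; this bridging device is exactly what your proposal is missing.
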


\noindent We are finally able to state our main result, solving Prob.~\ref{prob:qrep}.
\begin{proposition}
	\label{prop:main}
	In the setup of Prob.~\ref{prob:qrep}, let $r_i \coloneqq \rad(\cl C_i)$, $i \in \{1,2\}$, $r \coloneqq r_1 + r_2$, and ${\sf A}$ defined in \eqref{eq:nle} with $\delta > 0$. Given $\eta \in (0,1)$, if
	\begin{equation}
	\label{eq:cond-m}
	\ts m \gtrsim (w^2_\cap  + 
	n\frac{\delta^2}{\sigma^2}) (1 + \log (1 +
	\frac{r m}{\delta n}) + w^{-2}_\cap \log\tinv{\eta}), 
	\end{equation}
	then  $p_\delta \geq 1 - \eta$.
\end{proposition}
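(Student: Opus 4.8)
The plan is to exploit the factorisation $\sf A = {\sf A}' \circ \bs\Phi$, with ${\sf A}'(\cdot) = \cl Q_\delta(\cdot + \bs\xi)$, and to prove the stronger quantitative statement that $\cl D_{\ell_1}(\sf A(\bs x_1), \sf A(\bs x_2)) > 0$ \emph{uniformly} over $\bs x_1 \in \cl C_1,\ \bs x_2 \in \cl C_2$; since $\sf A(\bs x_1), \sf A(\bs x_2) \in \delta\bb Z^m$, a strictly positive $\ell_1$ distance forces $\sf A(\bs x_1) \neq \sf A(\bs x_2)$, i.e.\ exactly the event $\sf E$ of \eqref{eq:secondformpdelta}. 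The lower bound on this distance is assembled by chaining the two lemmata: Lemma~\ref{lemma:tfd-dithered-embed} converts the quantised distance $\cl D_{\ell_1}({\sf A}'(\bs\Phi\bs x_1), {\sf A}'(\bs\Phi\bs x_2))$ into the unquantised $\cl D_{\ell_1}(\bs\Phi\bs x_1, \bs\Phi\bs x_2)$ up to an additive error $\lesssim \delta\epsilon$, while Lemma~\ref{lemma:rip12} through \eqref{eq:rip12-Cm} lower-bounds $\kappa_0\cl D_{\ell_1}(\bs\Phi\bs x_1, \bs\Phi\bs x_2)$ by $(1-\epsilon_0)\|\bs x_1 - \bs x_2\|_2 \geq (1-\epsilon_0)\sigma$.

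The two distortion budgets are then fixed to make the combined bound strictly positive. Keeping $\epsilon_0$ a fixed constant (say $\tinv{2}$) in Lemma~\ref{lemma:rip12} costs only $m \gtrsim w^2_\cap$ and yields $\kappa_0\cl D_{\ell_1}(\bs\Phi\bs x_1, \bs\Phi\bs x_2) \geq (1-\epsilon_0)\sigma$. Applying Lemma~\ref{lemma:tfd-dithered-embed} with parameter $\epsilon$ to $\bs a = \bs\Phi\bs x_1,\ \bs b = \bs\Phi\bs x_2$ and multiplying by $\kappa_0$ gives $\kappa_0\cl D_{\ell_1}(\sf A(\bs x_1), \sf A(\bs x_2)) \geq (1-\epsilon_0)\sigma - c\kappa_0\delta\epsilon$, which is positive as soon as $\epsilon \lesssim \sigma/\delta$. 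This single constraint pins the quantisation distortion to $\epsilon \simeq \sigma/\delta$, hence $\epsilon^{-2} \simeq \delta^2/\sigma^2$, which will surface as the factor $n\delta^2/\sigma^2$ in \eqref{eq:cond-m}.

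The main obstacle is to control the Kolmogorov entropy demanded by Lemma~\ref{lemma:tfd-dithered-embed} for the set $\cl E \coloneqq \bs\Phi(\cl C_1 \cup \cl C_2) \subset \bb R^m$, i.e.\ $\cl H_1(\cl E, \rho)$ at scale $\rho = \tfrac{m\delta\epsilon^2}{1+\epsilon}$. I would estimate it by transporting an $\ell_2$-covering of the $\cl C_i$ through $\bs\Phi$: on the event of \eqref{eq:rip12-Rn} (which needs only $m \gtrsim n$ at constant distortion) one has $\|\bs\Phi\bs u\|_1 \lesssim \tfrac{m}{\kappa_0}\|\bs u\|_2$, so any $\rho'$-net of $\cl C_i$ in $\ell_2$ maps to an $\ell_1$-net of $\bs\Phi\cl C_i$ at scale $\simeq \tfrac{m}{\kappa_0}\rho'$; matching this to $\rho$ forces $\rho' \simeq \delta\epsilon^2$, the $m$ cancelling. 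Since $\rad(\cl C_i) = r_i$, a volumetric bound gives $\cl N_2(\cl C_i, \rho') \leq (1 + 2r_i/\rho')^n$, whence $\cl H_1(\cl E, \rho) \lesssim n\log(1 + r/(\delta\epsilon^2))$ after the union over $i \in \{1,2\}$. With $\epsilon^2 \simeq \sigma^2/\delta^2$ this reads $n\log(1 + r\delta/\sigma^2)$, and because \eqref{eq:cond-m} already forces $m \gtrsim n\delta^2/\sigma^2$ one has $r\delta/\sigma^2 \leq rm/(\delta n)$, so the entropy is bounded by the stated $n\log(1 + \tfrac{rm}{\delta n})$ -- the self-referential log being harmless as it grows slowly. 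Feeding this into the hypothesis $m \gtrsim \epsilon^{-2}\cl H_1(\cl E, \rho)$ of Lemma~\ref{lemma:tfd-dithered-embed} produces the dominant term $n\tfrac{\delta^2}{\sigma^2}\log(1 + \tfrac{rm}{\delta n})$.

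It remains to collect the probabilities. Because $\cl E$ depends on $\bs\Phi$, I would first condition on a draw of $\bs\Phi$ in the intersection of the events of \eqref{eq:rip12-Cm} and \eqref{eq:rip12-Rn} -- each failing with probability at most $C\exp(-cm)$ at constant distortion -- on which $\cl E$ is a \emph{fixed} bounded set obeying the entropy bound; Lemma~\ref{lemma:tfd-dithered-embed} is then invoked over the independent dither $\bs\xi$, failing with probability at most $C\exp(-cm\epsilon^2) = C\exp(-cm\sigma^2/\delta^2)$. A union bound makes the binding requirement $C\exp(-cm\sigma^2/\delta^2) \leq \eta$, i.e.\ $m \gtrsim \tfrac{\delta^2}{\sigma^2}\log\tinv{\eta}$, while the width step contributes the separate $m \gtrsim w^2_\cap + \log\tinv{\eta}$. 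Writing the maximum of the width requirement ($w^2_\cap$), the entropy requirement ($n\tfrac{\delta^2}{\sigma^2}\log(1+\tfrac{rm}{\delta n})$) and the two probability requirements compactly as the single product $(w^2_\cap + n\tfrac{\delta^2}{\sigma^2})(1 + \log(1+\tfrac{rm}{\delta n}) + w^{-2}_\cap\log\tinv{\eta})$ -- which upper-bounds each of them, using $w^2_\cap \lesssim n$ to absorb the $\tfrac{\delta^2}{\sigma^2}\log\tinv{\eta}$ term -- reproduces \eqref{eq:cond-m}; on the good event every admissible pair is separated, so $p_\delta \geq 1 - \eta$.
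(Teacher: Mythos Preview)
Your approach mirrors the paper's almost exactly: both chain Lemma~\ref{lemma:tfd-dithered-embed} (for the dithered quantiser ${\sf A}'$) with Lemma~\ref{lemma:rip12} via \eqref{eq:rip12-Cm} (for the Gaussian projection), estimate the Kolmogorov entropy of $\bs\Phi(\cl C_1\cup\cl C_2)$ by transporting an $\ell_2$-covering through $\bs\Phi$ using \eqref{eq:rip12-Rn}, and then tune the distortions so that $\cl D_{\ell_1}({\sf A}(\bs x_1),{\sf A}(\bs x_2))>0$ uniformly.

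The one place where your bookkeeping slips is the invocation of \eqref{eq:rip12-Rn} ``at constant distortion'': as stated, that costs $m\gtrsim n$, a requirement you drop when collecting terms and which is \emph{not} implied by \eqref{eq:cond-m} in the regime $\delta\ll\sigma$ (there $w_\cap^2+n\delta^2/\sigma^2$ can be much smaller than $n$, since $w_\cap^2\lesssim n$). The paper sidesteps this by \emph{coupling} the distortion parameters rather than decoupling them: it takes the same $\epsilon$ in Lemma~\ref{lemma:tfd-dithered-embed} and in \eqref{eq:rip12-Rn}, and sets $\epsilon=\tfrac{\sqrt n}{w_\cap}\,\epsilon_0$. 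Then $\epsilon^{-2}n=\epsilon_0^{-2}w_\cap^2$, so the $\bb R^n$-RIP requirement collapses onto the mesh-RIP requirement and there is only one free parameter $\epsilon_0$, which is solved for from the positivity constraint to give $\epsilon_0^{-2}w_\cap^2\lesssim w_\cap^2+n\delta^2/\sigma^2$ directly. Your decoupled choice (fixed $\epsilon_0$, $\epsilon\simeq\sigma/\delta$, and an independent constant distortion for the covering transport) is conceptually sound but does not quite close without either this linkage or a separate argument for the $\ell_2\!\to\!\ell_1$ upper bound of $\bs\Phi$ that avoids the hypothesis $m\gtrsim n$.
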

\begin{proof}[Proof of Prop.~\ref{prop:main}]
	Let us first observe when \eqref{eq:1} holds with $\bs\Phi \sim \cl N^{m \times n}(0,1)$, $\cl E \coloneqq \bs\Phi \cl C_\cup$, $\cl C_\cup \coloneqq \cl C_1 \cup \cl C_2 \subset \bb R^n$. This will be useful later to characterise when ${\sf A}(\cl C_1) \cap {\sf A}(\cl C_2) = \emptyset$. Let $\cl R_\cup$ be a $\rho$-covering in the $\ell_2$-metric
	of $\cl C_\cup$ for some $\rho>0$ to be specified below. If $m \gtrsim \epsilon^{-2} n$ for some $\epsilon > 0$, we have from
	\eqref{eq:rip12-Rn} that, with probability exceeding
	$1-C\exp(-c\epsilon^2 m)$,
	the event ${\sf E}_0$ where $\bs\Phi \cl R_\cup$ is a
	$\rho'$-covering of $\bs\Phi \cl C_\cup$ holds with $\rho' =
	2m(1+\epsilon)\rho$.
	This proves that, conditionally to ${\sf E_0}$ and for $\cl E = \bs\Phi \cl C_\cup$, $\cl H_1(\cl E,
	\rho') \leq \cl H_2(\cl C_\cup, \rho)$.
	However, $\cl H_2(\cl C_\cup, \rho) \leq \log 2 + \max(\cl H_2(\cl C_1,
	\rho), \cl H_2(\cl C_2, \rho)) \lesssim \max(\cl H_2(\cl C_1,
	\rho), \cl H_2(\cl C_2, \rho))$. Moreover, we have $\cl H_2(\cl C_i,
	\rho) \lesssim n \log (1 + \frac{r_i}{\rho})$~\cite{Pisier1999}, so that $\cl H_2(\cl
	C_\cup, \rho) \lesssim n \log (1 + \frac{r}{\rho})$. Setting $\rho' \coloneqq
	\tfrac{m\delta\epsilon^2}{1+\epsilon}$ gives $\rho =
	\tfrac{\delta\epsilon^2}{2(1+\epsilon)^2}$ and finally 
	$$
	\ts \cl H_1(\bs\Phi \cl C_\cup, \rho) \lesssim n \log (1 + \frac{2r(1+\epsilon)^2}{\delta\epsilon^2}).
	$$
	Consequently, conditionally to ${\sf E}_0$ which only depends on $\bs
	\Phi$, Lemma~\ref{lemma:tfd-dithered-embed} provides that if $m \gtrsim
	\epsilon^{-2} n \log (1 + \frac{2r(1+\epsilon)^2}{\delta\epsilon^2})$
	then, with probability exceeding $1-C\exp(-cm\epsilon^2)$,
	we get the occurrence of a new event, ${\sf E}'_0$, where \eqref{eq:1} holds with $\bs a = \bs\Phi \bs u$ and $\bs b = \bs\Phi
	\bs v$ for all $\bs u, \bs v \in \cl C_{\cup}$.  Under the same conditions, since $\bb
	P[{\sf E}'_0] \geq \bb P[{\sf E}'_0|{\sf E}_0] \bb P[{\sf E}_0]$, ${\sf E}'_0$ occurs unconditionally with $\bb P[{\sf E}'_0] \geq 1-C'\exp(-c'm\epsilon^2)$, for some $C',c'>0$.

	Second, if $m \gtrsim \epsilon_0^{-2} w^2_\cap$ for some $\epsilon_0 > 0$, Lemma~\ref{lemma:rip12} states that the event ${\sf
		E}_1$, where \eqref{eq:rip12-Cm} is respected for all $\bs x_1 \in \cl
	C_1$ and all $\bs x_2 \in \cl
	C_2$, holds with probability exceeding  $1-C\exp(-c\epsilon_0^2 m)$. 
	
	Given $\eta\!\in\!(0,1)$ and $\epsilon\!=\!\epsilon(\epsilon_0)\!\coloneqq\!\frac{\sqrt n}{w_\cap}\!\epsilon_0$, \ie with $\epsilon
	\gtrsim\!\epsilon_0$ since $w_\cap\!\lesssim\!\sqrt n$, the union
	bound yields that ${\sf E_0}$ and ${\sf E_1}$ jointly hold with
	probability exceeding $1-C\exp(-c\epsilon_0^2\!m)\!\geq\!1\!-\!\eta$ provided                
	\begin{equation}
	\label{eq:cond-on-m}
	\ts m \gtrsim
	\epsilon^{-2}_0 \big(w^2_\cap\,(1 +\log (1 +
	\frac{2r\,(1+\epsilon(\epsilon_0)\,)^2}{\delta\,\epsilon(\epsilon_0)^2})) + \log\tinv{\eta}\big). 
	\end{equation}
	
	In this case, for all $\bs u \in \cl C_1$ and $\bs v \in \cl C_2$ (or \emph{vice versa}), \eqref{eq:1} (with $\bs a \coloneqq \bs\Phi \bs u$ and $\bs b \coloneqq \bs\Phi
	\bs v$) 
	and \eqref{eq:rip12-Cm}
	give, for some $c>0$,
	\begin{align*}
	&\ts \cl D_{\ell_1}({\sf A}(\bs u),{\sf A}(\bs v)) = \cl
	D_{\ell_1}({\sf A}'(\bs\Phi \bs u),{\sf A}'(\bs\Phi \bs v))\\
	&\ts \geq \cl
	D_{\ell_1}(\bs\Phi \bs u,\bs\Phi \bs v) - c \delta\epsilon \geq
	\kappa_0^{-1}\,(1-\epsilon_0) \|\bs u - \bs v\|_2 - c \delta\epsilon\\
	&\ts \geq \kappa_0^{-1}\,(1-\epsilon_0) \sigma - c \delta\epsilon = \kappa_0^{-1}\,(1-\epsilon_0) \sigma - \frac{c\delta\sqrt n}{w_\cap} \epsilon_0.  
	\end{align*}
	In order to have ${\sf A}(\cl C_1) \cap {\sf A}(\cl C_2) = \emptyset$, the last
	quantity must be positive. Since $\epsilon_0 > 0$, this clearly happens if
	$\kappa_0^{-1}\,(1-\epsilon_0) \sigma - \frac{c\delta\sqrt
		n}{w_\cap} \epsilon_0 = \frac{c\delta\sqrt
		n}{w_\cap} \epsilon_0$, which gives 
	$$
	\ts \epsilon^{-2}_0 w^2_\cap = (w_\cap  + 2c\kappa_0\sqrt
	n\frac{\delta}{\sigma})^2 \lesssim w^2_\cap  + 
	n\frac{\delta^2}{\sigma^2}. 
	$$
	Moreover, from the value of $\epsilon = \epsilon(\epsilon_0)$ set above,
	$\frac{2r(1+\epsilon)^2}{\delta\epsilon^2} \leq 2\frac{r}{n\delta}(w^2_\cap + n\frac{\delta^2}{\sigma^2})$, so that \eqref{eq:cond-on-m} is satisfied if \eqref{eq:cond-m} holds. This gives finally that $p_\delta \geq 1 -\eta$ under this condition.
\end{proof}
Interestingly, up to diverging $\log$ factors (possibly due to proof artefacts), the requirement of Prop.~\ref{prop:mainrep} can be seen as a special case of \eqref{eq:cond-m} when $\delta \to 0^+$, \ie for a ``vanishing'' quantiser, since $(w_\cap + \sqrt{2 \log \tinv{\eta}})^2 \lesssim w_{\cap}^2 + \log\tinv{\eta}$. Finally, the application of Prop.~\ref{prop:main} to more than two sets is possible, and will be included in an extended version of this paper. 
\begin{figure*}[t]
	\null\hfill{
		\includegraphics[height=1.9in]{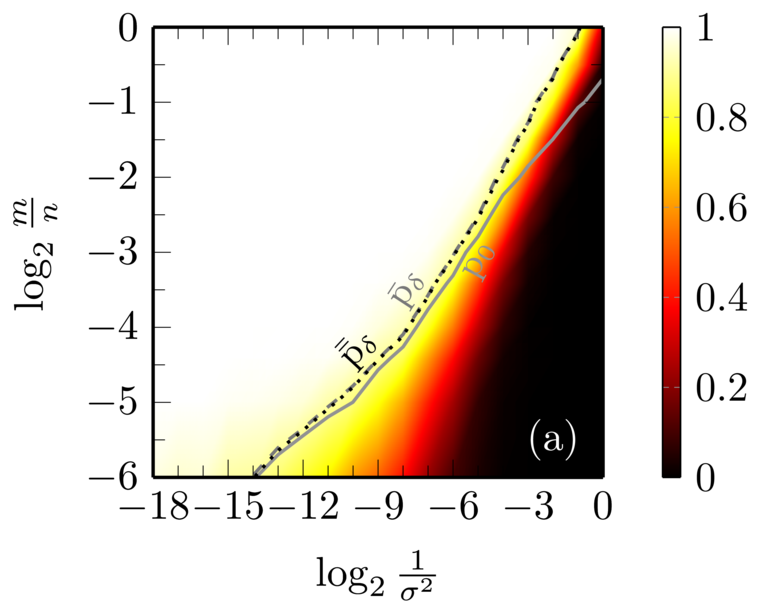}
	}\hfill{
		\includegraphics[height=1.9in]{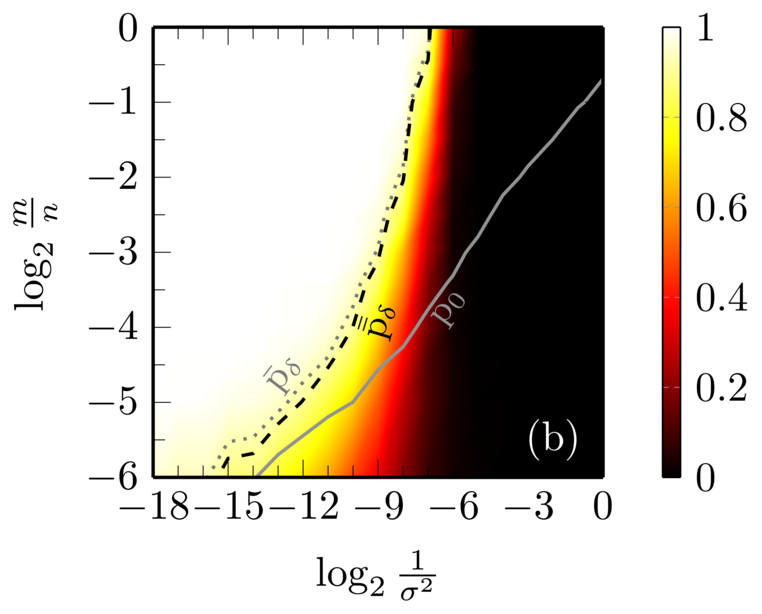}
	}
	\hfill{
		\includegraphics[height=1.9in]{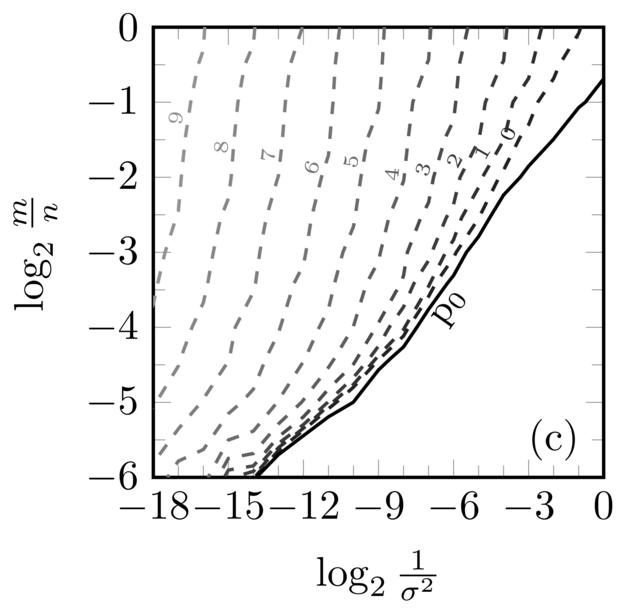}
	}
	\hfill\null
	\vspace{-2ex}
	\caption{\label{fig:empirical-res}Empirical phase transitions of the quantised eclipse problem for the case of two disjoint $\ell_2$-balls; for $128$ random instances of $\bs\Phi$ and as a function of $\frac{1}{\sigma^2}$ and the rate $\frac{m}{n}$, we report for $\delta = 2^1$  {(a)} and $\delta = 2^4$ {(b)} the empirical estimate of $\bar{\bar{p}}_\delta$ (heat map), along with the phase transition curves at probability $0.9$;~(c)~illustrates how the phase transition curves $\bar{\bar{\rm p}}_\delta$  vary for several values of $\delta$, with $\log_2 \delta$ being annotated on each line.
		\vspace{-3ex}}
\end{figure*}
\subsection{Testable Conditions by Convex Problems}
\label{sec:testable-conditions-by-convex-optimisation}
To properly verify the bound on $p_\delta$ in Prop.~\ref{prop:main} we should test the existence of any element in $\sf A(\cl C_1) \cap \sf A(\cl C_2) \subset \delta \bb Z^m$, \ie of any two \emph{consistent} vectors~$\bs x_1 \in \cl C_1, \bs x_2 \in \cl C_2 : \sf A(\bs x_1) = \sf A(\bs x_2)$. As expected, this search is computationally intractable, so we now deduce numerically testable, albeit less tight lower bounds for $p_\delta$. Let us first define the {\em consistency margin} $\tau$ 
\begin{equation}
\label{eq:trueprob}
\tau \coloneqq \min_{\bs z \in \cl C^-} \|\bs\Phi \bs z\|_\infty,
\end{equation} 
that is a function of $\bs\Phi$ and $\cl C^-$, and can be related to the minimal separation $\sigma$ as defined above. Moreover, the event $\tau > \delta$ depends only on $\bs \Phi$, so we can write \eqref{eq:secondformpdelta} as 
\begin{equation}
p_\delta = \bb P_{\bs\Phi,\bs\xi}[\sf E |\tau \leq \delta] \bb P_{\bs\Phi}[\tau \leq\delta] + \bar{p}_{\delta} \geq \bar{p}_{\delta}, \label{eq:boundcomp1}
\end{equation}
where $\bar{p}_\delta \coloneqq \bb P_{\bs\Phi}[\tau > \delta]$ (\ie $\bb P_{\bs\Phi,\bs\xi}[\sf E |\tau > \delta] = 1$) since $\|\bs\Phi (\bs x_1 - \bs x_2)\|_\infty \geq \tau > \delta \implies \sf A(\bs x_1) \neq \sf A(\bs x_2)$, while the converse does not hold. Note that $\bar{p}_\delta$ fully accounts for the cases in which ${\rm Ker}(\bs\Phi) \cap \cl C^- \neq \emptyset$, since if $\bs\Phi \bs z\!=\!\vzer_m\!\implies\!\tau=0$. Clearly, we can now estimate $\bar{p}_{\delta}$ as $\tau$ can be computed for each $\bs \Phi$ when the optimisation problem \eqref{eq:trueprob} is convex (\ie iff $\cl C^-$ is, as for disjoint convex sets).

To tighten this bound and fully leverage dithering, we form a partition $\cup_j \cl C^{(j)} = \cl C^-$ formed by the cones
\[
\cl C^{(j)} \coloneqq \{\bs z \in \cl C^- : \vert \bs \varphi_j^\top \bs z\vert \geq \vert \bs \varphi_{i}^\top \bs z\vert, \forall i \neq j \in [m]\}\subset \cl C^-.
\]
We can now define for $j\in[m], \tau_j \coloneqq \min_{\bs z \in \cl C^{(j)}} \vert\bs\varphi^\top_j \bs z\vert$, where clearly $\tau_j \geq \tau$. 
Letting $\sf A_i(\bs x) \coloneqq \cl Q_\delta(\bs\varphi_i^\top \bs z + \xi_i)$, we use a shorthand for the event $\sf E_i\coloneqq \{\sf A_i(\bs x_1) \neq \sf A_i(\bs x_2)\}$ and bound
\begin{IEEEeqnarray}{rl}
	\nonumber & p_\delta = \bb P_{\bs \Phi,\bs \xi}[\forall \bs x_1 \in \cl C_1, \bs x_2 \in \cl C_2, \exists i \in [m] : \sf E_i] \\
	\nonumber & = \bb P_{\bs \Phi,\bs \xi}[\forall j\!\in\![m],\!\bs x_1\!\in\!\cl C_1,\!\bs x_2\!\in\!\cl C_2,\!\bs x_1\!-\!\bs x_2\!\in\!\cl C^{(j)},\exists i \in [m] : \sf E_i]
	\\
	\nonumber & \geq \bb P_{\bs \Phi,\bs \xi}[\forall j\!\in\![m],\!\bs x_1\!\in\!\cl C_1,\!\bs x_2\!\in\!\cl C_2,\!\bs x_1\!-\!\bs x_2\!\in\!\cl C^{(j)}, \sf E_j]\\
	\nonumber & = \bb E_{\bs \Phi} \bb P_{\bs \xi}[\forall j\!\in\![m],\!\bs x_1\!\in\!\cl C_1,\!\bs x_2\!\in\!\cl C_2,\!\bs x_1\!-\!\bs x_2\!\in\!\cl C^{(j)}, \sf E_j \vert\bs\Phi]
\end{IEEEeqnarray}
Then, since the entries $\xi_j$ of $\bs\xi\!\sim\!\cl U^m([0,\delta])$ are {\em i.i.d.}, 
\begin{IEEEeqnarray}{rl}
	\nonumber & \bb P_{\bs \xi}[\forall j\!\in\![m],\!\bs x_1\!\in\!\cl C_1,\!\bs x_2\!\in\!\cl C_2,\!\bs x_1\!-\!\bs x_2\!\in\!\cl C^{(j)}, \sf E_j\vert\bs\Phi] \\
	\nonumber & = \ts \prod_{j \in [m]}\bb P_{\xi_j}[\forall \bs x_1 \in \cl C_1, \bs x_2 \in \cl C_2: \bs x_1 - \bs x_2 \in \cl C^{(j)}, \sf E_j \vert\bs\Phi] \\
	&=\ts\prod_{j\in[m]}\!\min\{1,\!\tfrac{\tau_j}{\delta}\}\nonumber \\ 
	& \implies\!p_\delta \geq \bb E_{\bs\Phi} \ts \prod_{j\in[m]}\min\{1,\tfrac{\tau_j}{\delta}\} \eqqcolon \bar{\bar{p}}_\delta, 
	\label{eq:boundcomp2}
\end{IEEEeqnarray}
where the second last line follows since $\sf E_j$ occurs whenever, given two intervals $\bs \varphi^\top_j \cl C_1, \bs \varphi^\top_j \cl C_2 \subset \bb R$ that are $\tau_j$ far apart, a quantiser threshold $\delta\bb Z +\xi_j$ falls between them. Hence, this event is identical to having $\bb P[\xi_j \in [0,{\tau_j}]] = \min\{1,\tau_j/\delta\}$ since $\xi_j \sim \cl U([0,\delta])$. The computational complexity of estimating $\bar{\bar{p}}_\delta$ is similar to that of \eqref{eq:boundcomp1}, while \eqref{eq:boundcomp2} is sharper, as it can be shown that $\bar{\bar{p}}_\delta\geq {\bar{p}}_\delta$. However, we expect both bounds to be somewhat loose \emph{w.r.t.} the one in Prop.~\ref{prop:main}.
\subsection{The Case of Two Disjoint $\ell_2$-Balls}\label{sec:the-case-of-two-euclidean-balls}
We now briefly focus on the case of two $\ell_2$-balls $\cl C_1 \coloneqq r_1 \bb B^n_{2} + \bs c_1$ and $\cl C_2 \coloneqq r_2 \bb B^n_{2} + \bs c_2$, for which $\cl C^- =  r \bb B^{n}_{2} + \bs c$ with $\bs c \coloneqq \bs c_1 - \bs c_2$ and $r \coloneqq r_1 + r_2$. It is then shown in \cite[Prop.~4.3]{AmelunxenLotzMcCoyEtAl2014} that ${w_\cap} \lesssim \frac{r}{\|\bs c\|_2} \sqrt n \simeq \frac{\sqrt{n}}{\sigma}$ when $\sigma \gg r$ since $\sigma =  \|\bs c\|_2 - r$. We can now compare the sample complexities in Prop.~\ref{prop:mainrep} and Prop.~\ref{prop:main}: up to some $\log$ and additive factors, we see that Prob.~\ref{prob:qrep} has rate $\tfrac{m}{n} \gtrsim \frac{1}{\sigma^2} (1+ \delta^2)$, while Prob.~\ref{prob:rep} only requires $\tfrac{m}{n} \gtrsim \frac{1}{\sigma^2}$, hence showing the effect of $\delta$ that we will illustrate in our numerical experiments below.

\section{Numerical Experiments}\label{sec:numerical-experiments}
\label{sec:numexps}
We now test the special case of Sec.~\ref{sec:the-case-of-two-euclidean-balls} by generating random instances of $\bs\Phi \sim \cl N^{m\times n}(0,1)$ and\footnote{By uniformity of ${\rm Ker}(\bs\Phi)$, $\bs\Phi \sim \cl N^{m\times n}(0,1)$ over the Grassmannian at the origin, it is legitimate to fix a randomly drawn direction $\tfrac{\bs c}{\|\bs c\|_2}$ for the simulations.} $\cl C^-$, and computing the quantities $\tau_j, j \in [m]$ and $\tau$ for each instance, as specified in Sec.~\ref{sec:testable-conditions-by-convex-optimisation}. This allows us to empirically estimate $\bar{p}_\delta, \bar{\bar{p}}_\delta$ respectively in \eqref{eq:boundcomp1}, \eqref{eq:boundcomp2} on $128$ trials for each of the configurations $n = 2^6$ and $m \in [2^0,2^6]$, and varying  $\cl C^-$ by fixing $r = 2$ and taking $\sigma = \|\bs c\|_2 - r \in [2^0, 2^{9}]$. The estimated values of $\bar{\bar{p}}_\delta$ are then reported as heat maps in Fig. \ref{fig:empirical-res}a,b along with the phase transition curves 
 $\bar{\bar{\rm p}}_\delta \coloneqq \{\bar{\bar{p}}_\delta \geq 0.9\}$, ${\bar{\rm p}}_\delta \coloneqq \{{\bar{p}}_\delta \geq 0.9\}$, and the linear case of Prop.~\ref{prop:mainrep} ${{\rm p}}_0 \coloneqq \{{{p}}_0 \geq 0.9\}$, with $p_0$ being estimated as in \cite{BandeiraMixonRecht2014}. 
Given $\tau_j, j\in[m]$ for all instances, we compute in {Fig.~\ref{fig:empirical-res}c} the phase transition curves corresponding to $\bar{\bar{p}}_\delta$ for several $\delta = \{2^0, 2^1, \ldots, 2^9\}$. For each curve, the event ${\sf A}(\cl C_1) \cap {\sf A}(\cl C_2) = \emptyset$ holds with probability at least $0.9$. These curves are indeed compatible with the fact that $\log_2 \tfrac{m}{n} \gtrsim \log_2 \tfrac{1}{\sigma^2} + \log_2 (1+\delta^2)$ (up to $\log$ factors, and as concluded in Sec.~\ref{sec:the-case-of-two-euclidean-balls}). However, we suspect that $\bar{\bar{p}}_\delta$ is still not sufficiently tight to approach our theoretical, albeit computationally intractable, bound on $p\delta$, and leave this improvement to a future investigation.

\section{Conclusion}
\label{sec:concl-open-quest}
The fundamental limits of learning tasks with embeddings are being tackled in several studies; our result illustrates the requirements for exact classification after quantised random embedding of two disjoint closed convex sets. As we only developed cases in which the datasets $\cl K$ are not specified as low complexity sets, we will discuss them in future works, ~\eg for the case of $Q$ disjoint ``clusters'' of sparse signals $\cl C_i, i \in [Q]$. 

\section{Appendix}
\label{sec:proof-main-prop}

\begin{proof}[Proof of Lemma \ref{lemma:tfd-dithered-embed}] 
	We adapt the proof of~\cite[Prop.~1]{JacquesCambareri2016}. 
	Given $\rho>0$ to be fixed later, let $\cl E_\rho$ be a $\rho$-covering of $\cl E$ in the $\ell_1$-metric, \ie for all $\bs a \in \cl E$ there exists $\bs a_0 \in \cl E_\rho$ such that $\|\bs a - \bs a_0\|_1 \leq \rho$. Notice that since $X(\bs a, \bs b) \coloneqq \cl D_{\ell_1}({\sf A}'(\bs a), {\sf A}'(\bs b)) = \tinv{m}\,\sum_i |\cl Q(a_i + \xi_i) - \cl Q(b_i + \xi_i)| = \tinv{m}\sum_i X_i$, with the \emph{i.i.d.} sub-Gaussian \rvs $X_i$ such that
	$\bb E X_i = |a_i - b_i|$~\cite[App. A]{Jacques2015}, one can easily
	prove the concentration of $X(\bs a, \bs b)$ around $\bb E X(\bs a,
	\bs b) = \cl D_{\ell_1}(\bs a, \bs b)$ both on a fixed pair $\bs a,
	\bs b \in \cl E$ and, by union bound, for all $\bs a, \bs b \in \cl
	E_\rho$ since there are no more than $(e^{\cl H_1(\cl E,\rho)})^2$
	such pairs in $\cl E_\rho$. Unfortunately, the discontinuity of the mapping ${\sf A}'$ prevents us
	to directly extend this over the full set $\cl E$ by a continuity
	argument applied to each neighbourhood of the covering. However, this situation can be overcome by \emph{softening}
	the pseudo-distance $d(\cdot,\cdot) \coloneqq |\cl Q(\cdot) - \cl Q(\cdot)|$ composing $X$~\cite{Jacques2015,PlanVershynin2014}. 
	We first note that 
	$d(a,b) \coloneqq \delta \sum_{k\in \bb Z} \bb I_{\cl S}( a - k\delta, b - k\delta)$,
	where $\cl S = \{(a,b) \in \bb R^2: ab < 0\}$ and $\bb I_{\cl C}(a,b)$
	is the indicator of $\cl C$ evaluated in $(a, b)$, \ie it is equal to $1$ if $(a,b) \in \cl C$ and $0$ otherwise. In fact, $d(a,b) =
	\delta |(\delta \bb Z) \cap [a,b]|$, with $|\cdot|$ the cardinality operator, showing that $d/\delta$
	counts the number of \emph{thresholds} in $\delta \bb Z$ that can be
	inserted between $a$ and $b$. 
	
	Introducing the
	set $\cl S^t = \{(a,b) \in \bb R^2: a < - t, b > t\} \cup \{(a,b) \in
	\bb R^2: a >  t, b < - t\}$ for $t\in \bb R$, with $\cl S^0=\cl S$, we can define a soft
	version of $d$ by
	\begin{equation}
	\label{eq:def-d-t}
	\ts d^t(a,b) \coloneqq \delta \sum_{k\in \bb Z} \bb I_{\cl S^t}( a - k\delta, b
	- k\delta).   
	\end{equation}
	Thanks to $\cl S^t$, the value of $t$ determines a set of forbidden (or relaxed) intervals $\delta \bb Z + [-|t|, |t|] = \{\,[k\delta-|t|, k\delta+|t|]: k \in \bb Z\}$ if $t > 0$ (respectively $t < 0$) of size $2|t|$ and centred on the quantiser thresholds in $\delta \bb Z$. For $t >0$ a threshold of $\delta \bb Z$ is not counted in $d^t(a,b)$ if $a$ or $b$ fall in its forbidden interval, whereas for $t<0$ a threshold that is not between $a$ and $b$ can be counted if $a$ or $b$ fall inside its relaxed interval. 
	
	By extension, we can also define $\cl D^t(\bs a, \bs b) \coloneqq \tinv{m}
	\sum_{i} d^t(a_i,b_i)$ for $\bs a, \bs b \in
	\bb R^m$, so that $\cl D^0(\bs a, \bs b) = \cl D_{\ell_1}\big(\cl Q(\bs a), \cl
	Q(\bs b)\big)$. 	
	Interestingly, this distance displays the following	continuity property~\cite[Lemma 2]{Jacques2015}. For $\bs a,\bs b \in \cl E$, and $\bs a_0, \bs b_0$ their respective closest points in $\cl E_\rho$ we have, for every $t\in \bb R$ and\footnote{In~\cite[Lemma 2]{Jacques2015}, it is assumed $P \geq 1$ but nothing prevents $P>0$.}~$P > 0$,
        \begin{IEEEeqnarray}{C}
          \label{eq:continuity-L1-Dt-left}
          \ts \cl D^t(\bs a,\bs b) \geq \cl D^{t+\tfrac{\rho P}{m}}(\bs a_0,\bs b_0) - 8 (\frac{\delta}{P} + \tfrac{\rho}{m}),\\        
          \label{eq:continuity-L1-Dt-right}
          \ts \cl D^t(\bs a,\bs b) \leq \cl D^{t-\tfrac{\rho P}{m}}(\bs a_0,\bs b_0) + 8 (\frac{\delta}{P}+ \tfrac{\rho}{m}).
        \end{IEEEeqnarray}
        Moreover, for $\bs \xi \sim \cl U^m([0,\delta])$ and $\bs a, \bs b$ fixed, $\cl D^t(\bs a + \bs \xi, \bs b + \bs \xi)$ concentrates around its mean which is close to $\cl D_{\ell_1}(\bs a, \bs b)$~\cite[Lemma 3]{Jacques2015}. In fact, $\bb |\bb E \cl D^t(\bs a + \bs \xi, \bs b + \bs \xi) - \cl D_{\ell_1}(\bs a, \bs b) | \lesssim |t|$, so that for some $c>0$,
	$$
	\ts \bb P\big[|\cl D^t(\bs a + \bs \xi, \bs b + \bs \xi) - \cl D_{\ell_1}(\bs a, \bs b)| > 4|t| + \epsilon (\delta + |t|)\big]
	\lesssim e^{-c \epsilon^2 m}.
	$$
	Therefore, by union bound and for some $P > 0$ to be fixed soon, if $m \gtrsim \epsilon^{-2} \cl H_1(\cl E,
	\rho)$ then
	\begin{align}
	&\ts| \cl D^{t\pm\frac{\rho P}{m}} (\bs a_0' + \bs \xi, \bs b_0' + \bs \xi) -
	\cl D_{\ell_1}(\bs a_0', \bs b_0') \big |\nonumber\\
	&\ts\label{eq:concent-cover-upper-2}
	\qquad\leq
	4|t|+4 \frac{\rho P}{m} +\epsilon(\delta+ |t| + \frac{\rho P}{m}),\ \forall \bs a_0',\bs b_0' \in \cl E_\rho,
	\end{align}
	with probability exceeding $1 - C e^{-c \epsilon^2 m}$ for some
	$C,c>0$. 
	
	Consequently, for any $\bs a, \bs b \in \cl E$ and $\bs a_0, \bs b_0$ their respective closest point in $\cl E_\rho$, using
	\eqref{eq:continuity-L1-Dt-right} combined with \eqref{eq:concent-cover-upper-2}, and since the triangular inequality provides $\cl D_{\ell_1}(\bs a_0, \bs b_0) \leq \cl D_{\ell_1}(\bs a, \bs b) + \frac{2\rho}{m}$, we have with the same probability and
	for some $c>0$,\vspace{-1ex}
	\begin{align*}
	&\ts\! \cl D^{t} (\bs a + \bs \xi, \bs b + \bs \xi) \leq \cl D^{t-\tfrac{\rho P}{m}} (\bs a_0 + \bs \xi, \bs b_0 + \bs \xi)   
	+ 8 (\frac{\delta}{P} + \tfrac{\rho}{m})\\
	&\ts \!\!\leq \cl D_{\ell_1}(\bs a_0, \bs b_0) + 
	4|t|+ \frac{4\rho P}{m} +\epsilon(\delta+ |t| +\tfrac{\rho P}{m}) + 8 (\frac{\delta}{P}  + \tfrac{\rho}{m})\\
	&\ts \!\! \leq \cl D_{\ell_1}(\bs a, \bs b) + c\big(|t|+ \frac{\rho (1 + P(1+\epsilon))}{m} +\epsilon(\delta+ |t|) + (\frac{\delta}{P}  + \tfrac{\rho}{m})\big)\\
	&\leq \cl D_{\ell_1}(\bs a, \bs b) + c(|t| +
	\delta\epsilon),
	\end{align*}
	where we finally set the free parameters as
	$P^{-1} = \epsilon$ and $\rho = m\delta \tfrac{\epsilon^{2}}{1+\epsilon} < m\delta\min(\epsilon,\epsilon^2)$, giving $\rho P \leq
	m\delta \epsilon$ and $\tfrac{\rho}{m} \leq \delta \epsilon$. The lower bound is obtained
	similarly using \eqref{eq:continuity-L1-Dt-left} with the minus case of~\eqref{eq:concent-cover-upper-2}, and
	Prop.~\ref{lemma:tfd-dithered-embed} is finally obtained with $t=0$.
\end{proof}
\vtodo{Remarks: 
	* Note that in the small sigma regime, Prop. 1 diverge on the requirement on 
$m$ (except if we keep $\delta < C sigma$). I guess this make sense somehow and illustrate the special quantised geometry compared to linear REP that doesn't display such a divergence.
}
\bibliographystyle{IEEEtran}
\bibliography{qcs-qrep}

\begin{thebibliography}{10}
\providecommand{\url}[1]{#1}
\csname url@samestyle\endcsname
\providecommand{\newblock}{\relax}
\providecommand{\bibinfo}[2]{#2}
\providecommand{\BIBentrySTDinterwordspacing}{\spaceskip=0pt\relax}
\providecommand{\BIBentryALTinterwordstretchfactor}{4}
\providecommand{\BIBentryALTinterwordspacing}{\spaceskip=\fontdimen2\font plus
\BIBentryALTinterwordstretchfactor\fontdimen3\font minus
  \fontdimen4\font\relax}
\providecommand{\BIBforeignlanguage}[2]{{%
\expandafter\ifx\csname l@#1\endcsname\relax
\typeout{** WARNING: IEEEtran.bst: No hyphenation pattern has been}%
\typeout{** loaded for the language `#1'. Using the pattern for}%
\typeout{** the default language instead.}%
\else
\language=\csname l@#1\endcsname
\fi
#2}}
\providecommand{\BIBdecl}{\relax}
\BIBdecl

\bibitem{GrayNeuhoff1998}
R.~M. Gray and D.~L. Neuhoff, ``{Q}uantization,'' \emph{{IEEE} {T}ransactions
  on {I}nformation {T}heory}, vol.~44, no.~6, pp. 2325--2383, 1998.

\bibitem{Jacques2015}
L.~Jacques, ``{S}mall width, low distortions: quasi-isometric embeddings with
  quantized sub-{Gauss}ian random projections,'' \emph{ar{X}iv preprint
  ar{X}iv:1504.06170}, 2015.

\bibitem{BoufounosJacquesKrahmerEtAl2015}
P.~T. Boufounos, L.~Jacques, F.~Krahmer, and R.~Saab, ``{Q}uantization and
  compressive sensing,'' in \emph{Compressed Sensing and its
  Applications}.\hskip 1em plus 0.5em minus 0.4em\relax Springer, 2015, pp.
  193--237.

\bibitem{RahimiRecht2008}
A.~Rahimi and B.~Recht, ``{R}andom {F}eatures for {L}arge-{S}cale {K}ernel
  {M}achines,'' in \emph{Advances in Neural Information Processing Systems 20},
  J.~C. Platt, D.~Koller, Y.~Singer, and S.~T. Roweis, Eds.\hskip 1em plus
  0.5em minus 0.4em\relax Curran Associates, Inc., 2008, pp. 1177--1184.

\bibitem{BoufounosRaneMansour2015}
P.~T. Boufounos, S.~Rane, and H.~Mansour, ``{R}epresentation and {C}oding of
  {S}ignal {G}eometry,'' \emph{ar{X}iv preprint ar{X}iv:1512.07636}, 2015.

\bibitem{JacquesCambareri2016}
L.~Jacques and V.~Cambareri, ``{T}ime for dithering: fast and quantized random
  embeddings via the restricted isometry property,'' \emph{ar{X}iv preprint
  ar{X}iv:1607.00816}, 2016.

\bibitem{MoshtaghpourJacquesCambareriEtAl2016}
A.~Moshtaghpour, L.~Jacques, V.~Cambareri, K.~Degraux, and C.~De~Vleeschouwer,
  ``{C}onsistent {B}asis {P}ursuit for {S}ignal and {M}atrix {E}stimates in
  {Q}uantized {C}ompressed {S}ensing,'' \emph{{IEEE} {S}ignal {P}rocessing
  {L}etters}, vol.~23, no.~1, pp. 25--29, 2016.

\bibitem{BandeiraMixonRecht2014}
A.~S. Bandeira, D.~G. Mixon, and B.~Recht, ``{C}ompressive classification and
  the rare eclipse problem,'' \emph{ar{X}iv preprint ar{X}iv:1404.3203}, 2014.

\bibitem{Vershynin2012}
R.~Vershynin, ``{I}ntroduction to the non-asymptotic analysis of random
  matrices,'' in \emph{Compressed Sensing: Theory and Applications}.\hskip 1em
  plus 0.5em minus 0.4em\relax Cambridge University Press, 2012, pp. 210--268.

\bibitem{JohnsonLindenstrauss1984}
W.~B. Johnson and J.~Lindenstrauss, ``{E}xtensions of {L}ipschitz mappings into
  a {Hilbert} space,'' \emph{{C}ontemporary {M}athematics}, vol.~26, no.
  189-206, pp. 1--1, 1984.

\bibitem{DasguptaGupta2003}
S.~Dasgupta and A.~Gupta, ``\BIBforeignlanguage{en}{{A}n elementary proof of a
  theorem of {Johnson} and {Lindenstrauss}},''
  \emph{\BIBforeignlanguage{en}{{R}andom {S}tructures \& {A}lgorithms}},
  vol.~22, no.~1, pp. 60--65, Jan. 2003.

\bibitem{Achlioptas2003}
D.~Achlioptas, ``{D}atabase-friendly random projections:
  {Johnson-Lindenstrauss} with binary coins,'' \emph{{J}ournal of {C}omputer
  and {S}ystem {S}ciences}, vol.~66, no.~4, pp. 671--687, Jun. 2003.

\bibitem{OymakRecht2015}
S.~Oymak and B.~Recht, ``{N}ear-{O}ptimal {B}ounds for {B}inary {E}mbeddings of
  {A}rbitrary {S}ets,'' \emph{ar{X}iv preprint ar{X}iv:1512.04433}, 2015.

\bibitem{Jacques2015a}
L.~Jacques, ``{A Quantized Johnson--Lindenstrauss lemma: The finding of
  Buffon{\textquoteright}s needle},'' \emph{{IEEE} {T}ransactions on
  {I}nformation {T}heory}, vol.~61, no.~9, pp. 5012--5027, 2015.

\bibitem{PlanVershynin2014}
Y.~Plan and R.~Vershynin, ``{D}imension reduction by random hyperplane
  tessellations,'' \emph{{D}iscrete \& {C}omputational {G}eometry}, vol.~51,
  no.~2, pp. 438--461, 2014.

\bibitem{PlanVershynin2013}
------, ``{R}obust 1\mbox{-}bit compressed sensing and sparse logistic
  regression: {A} convex programming approach,'' \emph{{IEEE} {T}ransactions on
  {I}nformation {T}heory}, vol.~59, no.~1, pp. 482--494, 2013.

\bibitem{Dasgupta1999}
S.~Dasgupta, ``{L}earning mixtures of {Gaussians},'' in
  \emph{40\textsuperscript{th} {Annual} {Symposium} on {Foundations} of
  {Computer} {Science}}, 1999, pp. 634--644.

\bibitem{DavenportBoufounosWakinEtAl2010}
M.~A. Davenport, P.~T. Boufounos, M.~B. Wakin, and R.~G. Baraniuk, ``{S}ignal
  processing with compressive measurements,'' \emph{{IEEE} {J}ournal of
  {S}elected {T}opics in {S}ignal {P}rocessing}, vol.~4, no.~2, pp. 445--460,
  2010.

\bibitem{ReboredoRennaCalderbankEtAl2013}
H.~Reboredo, F.~Renna, R.~Calderbank, and M.~Rodrigues, ``{C}ompressive
  classification,'' in \emph{2013 {IEEE} {International} {Symposium} on
  {Information} {Theory} Proceedings ({ISIT})}, Jul. 2013, pp. 674--678.

\bibitem{ReboredoRennaCalderbankEtAl2016}
H.~Reboredo, F.~Renna, R.~Calderbank, and M.~R. Rodrigues, ``{B}ounds on the
  {N}umber of {M}easurements for {R}eliable {C}ompressive {C}lassification,''
  \emph{{IEEE} {T}ransactions on {S}ignal {P}rocessing}, vol.~64, no.~22, pp.
  5778--5793, 2016.

\bibitem{Gordon1988}
Y.~Gordon, ``{O}n {M}ilman's inequality and random subspaces which escape
  through a mesh in $\mathbb{R}$ n,'' in \emph{Geometric Aspects of Functional
  Analysis}.\hskip 1em plus 0.5em minus 0.4em\relax Springer, 1988, pp.
  84--106.

\bibitem{Schechtman2006}
G.~Schechtman, ``{Two observations regarding embedding subsets of {Euclid}ean
  spaces in normed spaces},'' \emph{{A}dvances in {M}athematics}, vol. 200,
  no.~1, pp. 125--135, 2006.

\bibitem{Pisier1999}
G.~Pisier, \emph{{T}he volume of convex bodies and {B}anach space
  geometry}.\hskip 1em plus 0.5em minus 0.4em\relax Cambridge University Press,
  1999, vol.~94.

\bibitem{AmelunxenLotzMcCoyEtAl2014}
D.~Amelunxen, M.~Lotz, M.~B. McCoy, and J.~A. Tropp, ``{L}iving on the edge:
  phase transitions in convex programs with random data,'' \emph{{I}nformation
  and {I}nference: {A} {J}ournal of the {IMA}}, vol.~3, no.~3, p. 224, 2014.

\end{thebibliography}

\end{document}